\newtheorem*{theorem}{Theorem}
\title{\textbf{On an Induced Distribution and its Statistical Properties}}
\author{Brijesh P. Singh \& Utpal Dhar Das \\\\
	Department of Statistics, Institute of Science\\
	Banaras Hindu University, Varanasi-221105}
\date{}
\begin{document}
\maketitle

\abstract
 In this study an attempt has been made to propose a way to develop new distribution. For this purpose, we need only idea about distribution function. Some important statistical properties of the new distribution like moments, cumulants, hazard and survival function has been derived. The renyi entropy, shannon entropy has been obtained. Also ML estimate of parameter of the distribution is obtained, that is not closed form. Therefore, numerical technique is used to estimate the parameter. Some real data sets are used to check the suitability of this distribution over some other existing distributions such as Lindley, Garima, Shanker and many more. AIC, BIC, -2loglikihood, K-S test suggest the proposed distribution works better than others distributions considered in this study.\\
 
 \noindent \textbf{Keywords:} Induced distribution, Bonferroni and Gini index, Entropy, Generating function, Hazard function, MLE, MRLF, Order Statistics.

\section{Introduction}
Almost all applied sciences including, biomedical science, engineering, finance, demography, environmental and agricultural sciences, there is a need of statistical analysis and modeling of the data. A number of continuous distributions for modeling lifetime data have been introduced in statistical literature such as Exponential, Lindley, Gamma, Lognormal and Weibull. Among these Gamma and Lognormal distributions are less popular because their survival functions cannot be expressed in closed forms and both require numerical integration. Researchers in probability distribution theory often use a probability distributions based on either their mathematical simplicity or because of their flexibility. Several parametric models are used in the analysis of lifetime data and in the problems associated with the modeling of the failure process. The Exponential distribution is often used to model the time interval between successive random events but Gamma and Weibull distribution is the most widely used model for lifetime distribution due to its flexibility. The exponential distribution is a particular case of the Gamma and Weibull distribution. In order to increase the suitability of the well-known distributions, many authors have proposed different transformations to generate new distributions, it has been an increased interest in defining new generators for univariate continuous distributions by introducing one or more additional shape parameter(s) to the baseline model. This improvs the goodness-of-fit of the proposed generated distribution.\\

\noindent In the context of increasing flexibility in distribution, many generalization or transformation methods are available in the literature
based on baseline distribution. \cite{ghitany2011two} developed a two-parameter weighted Lindley distribution and discussed its applications to survival data. \cite{zakerzadeh2009generalized} obtained a generalized Lindley distribution and discussed its various properties and applications. \cite{shaw2007alchemy} propose a new transformation method by adding one extra parameter. \cite{kumaraswamy1980generalized} gives another method of proposing new distribution by taking baseline distribution. \cite{abd2013utilizing} introduce Bilal distribution as a member of the families of distributions for the median of a random sample drawn from an arbitrary lifetime distribution. Since its failure rate function is monotonically increasing with finite limit for this they generalize Bilal distribution by making transformation $X=\left(\frac{Y-\delta}{\theta}\right)^{\lambda}$, the parameter $\delta$ is a threshold parameter, $\theta$ and $\lambda$ are the scale and the shape
parameters, respectively. \cite{gupta1998modeling} propose exponentiated type distribution by adding one more shape parameter. \cite{cordeiro2013beta} proposed a new class of distribution by adding two additional shape parameters. Also some well-known generators are the beta-G by \cite{eugene2002beta}, gamma-G by \cite{zografos2009families}, the Zografos-Balakrishnan-G family by \cite{nadarajah2015zografos}.\\

\section{Genesis of the distribution}
\noindent In this paper, an attempt has been made to develop a new distribution using concept of induced distribution. Let $X$ be a continuous random variable with probability density function $f(x)$, the cumulative distribution function $F(x)$ and expectation $E(x)$, we know that 
\begin{align}
	E(x)=\int\limits_{0}^{\infty}[1-F(x)]dx\qquad<\infty\nonumber
\end{align}
Now we define a p.d.f $g^{*}(x)$ as
\begin{align}
	g^{*}(x)=\frac{1-F(x)}{E(x)};\qquad x>{0}
\end{align}
Then $g^{*}(x)$ is called an induced distribution \cite{gupta1990role}. Actually this distribution is a particular case of weighted distribution defined by \cite{patil1977weighted}. According to the \cite{patil1977weighted}, if $f(x;\theta)$  be the probability distribution function of random variable \textit{X}  and the unknown parameter $\theta$  the weighted distribution is defined as;

\begin{center}
	$f(x;\theta)=\dfrac{w(x)f^*(x;\theta)}{E[w(x)]}\quad;\quad x\in \mathbb{R},\theta>0 $
\end{center}
where $w(x)$  is the weight function,  and $f^*(x;\theta)$  is the base line distribution. We know that $1-F(x)=s(x)=\frac{f(x)}{h(x)}$, i.e if we take $w(x)={h(x)^{-1}}$, we can get the induced distribution defined above in equation number (1). This distribution is well connected to its parent distribution and many of the statistical properties can be easily studied.

\section{Proposed Distribution}
We consider cdf of Lindley distribution and using the idea of induced distribution given in the equation (1), the pdf and cdf given in equation (2) and (3) respectively have been obtained. This distribution is already introduced by \cite{Garima2018Shanker} and known as Garima distribution, which is a mixture of Exponential ($\theta$) and Gamma (2,$\theta$) distribution with mixing proportion $\frac{\theta+1}{\theta+2}$. 
\begin{align}
		f(x;\theta)&=\frac{\theta}{\theta+2}\left(1+\theta+\theta x\right)e^{-\theta x}\\
		F(x;\theta)&=1-\left[1+\frac{\theta x}{\theta+2}\right]e^{-\theta x};x>0,\theta>0.
\end{align}
 Now, we consider cdf $F(x)$ of Garima distribution as a base line distribution and try to develop new distribution. The pdf and cdf of the new distribution is as follows
\begin{align}
	g(x;\theta)=\frac{\theta}{\theta+3}\left(2+\theta+\theta x\right)e^{-\theta x};\qquad x>0,\theta>0.
\end{align}
and the corresponding cdf is 
\begin{align}
	G(x;\theta)&=1-\left[1+\frac{\theta x}{\theta+3}\right]e^{-\theta x};\qquad x>0,\theta>0.
\end{align}
The above distribution is similar to the base line distribution and develop using concept of induced distribution thus named as induced Garima ($i$-Garima) distribution. This distribution can also be consider as second order induced Lindley distribution.
\begin{figure}[H]
	\centering
	\includegraphics[width=0.75\linewidth]{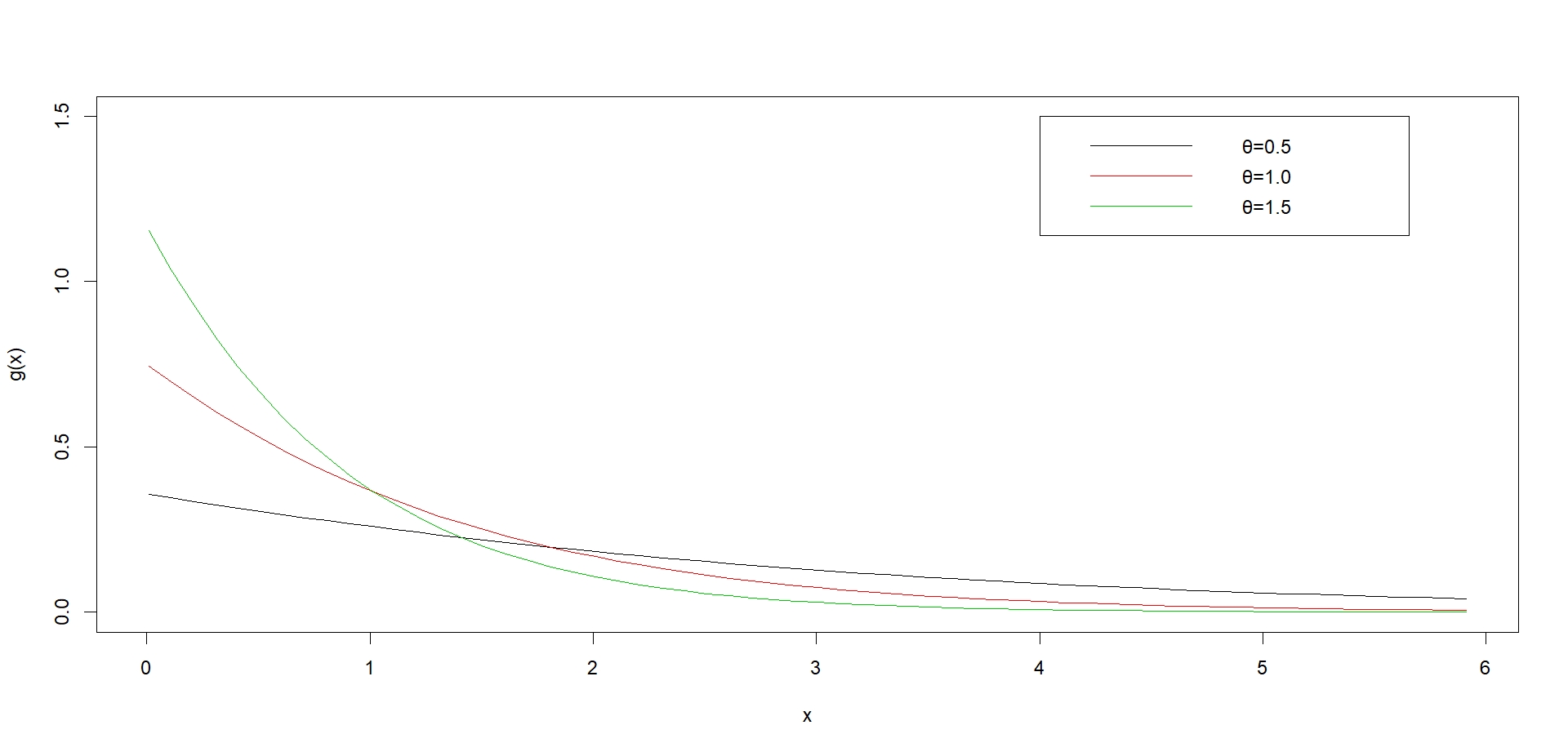}
	\caption{ Probability density function of $i$-Garima distribution}
	\label{fig:pdf}	
\end{figure}
\begin{figure}[H]
	\centering
	\includegraphics[width=0.75\linewidth]{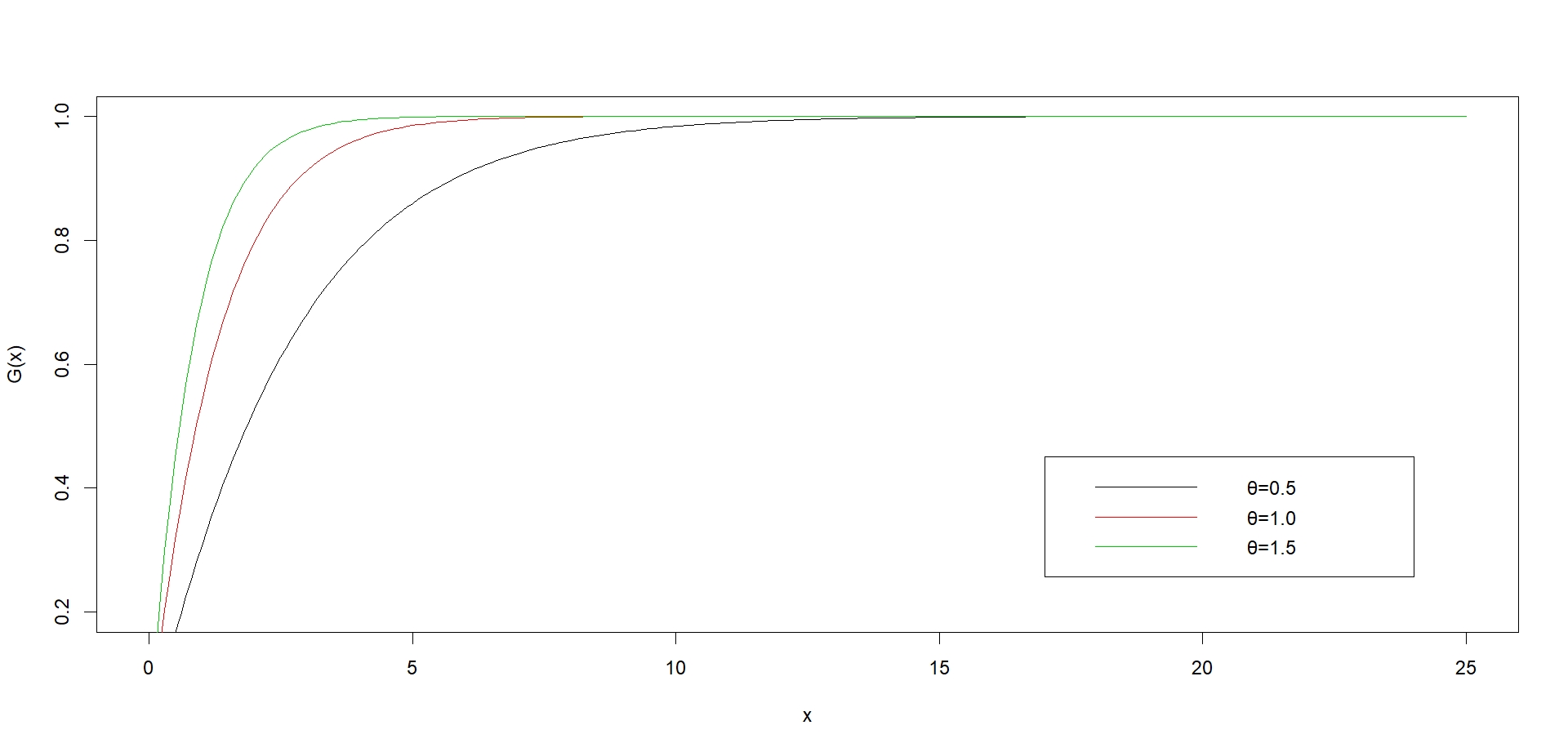}
	\caption{Cumulative distribution function of $i$-Garima distribution}
	\label{fig:cdf}	
\end{figure}
\noindent $i$-Garima distribution can be easily expressed as a mixture of Exponential($\theta$) and Gamma (2,$\theta$) with mixing proportion $\frac{\theta+2}{\theta+3}$. We have
\begin{align}
	f(x;\theta)=pg_1{(x)}+(1-p)g_{2}{(x)}
\end{align}
where $p=\frac{\theta+2}{\theta+3}, g_{1}{(x)}=\theta e^{-\theta x}$, and $g_{2}{(x)}=\theta^{2} x e^{-\theta x}$
\section{Properties}
The $r$-th order momemts about origin is given by  
\begin{equation}
	E(x^r)=\int\limits_0^{\infty}x^rg(x)dx=\frac{\theta}{\theta+3} \int\limits_0^{\infty}x^re^{-\theta x}\left(2+\theta+\theta x\right)dx\nonumber\\
\end{equation}
	Hence
	\begin{align}
		\mu'_r=\frac{r!}{\theta^r}\frac{\left(\theta+r+3\right)}{\left(\theta+3\right)}\qquad;r=1,2,3,...
	\end{align}
	Hence first four moments about origin is obtained as\\
	\begin{align}
		\mu'_1=\frac{1}{\theta}\frac{\left(\theta+4\right)}{\left(\theta+3\right)};\qquad \mu'_2=\frac{2}{\theta^2}\frac{\left(\theta+5\right)}{\left(\theta+3\right)};\qquad 
		\mu'_3=\frac{6}{\theta^3}\frac{\left(\theta+6\right)}{\left(\theta+3\right)};\qquad
		\mu'_4=\frac{24}{\theta^4}\frac{\left(\theta+7\right)}{\left(\theta+3\right)}\nonumber
	\end{align}
Using the above expression we get the four moments about mean, i.e. central moments of the proposed distribution is given by
\begin{align}
	\mu_1&=\frac{\theta+4}{\theta\left(\theta+3\right)};\qquad
	\mu_2=\frac{\theta^2+8\theta+14}{\theta^2\left(\theta+3\right)^2};\qquad\nonumber\\
	\mu_3&=\frac{2\left(\theta^3+12\theta^2+42\theta+46\right)}{\theta^3\left(\theta+3\right)^3};\qquad
	\mu_4=\frac{3\left(3\theta^4+48\theta^3+260\theta^2+592\theta+488\right)}{\theta^4\left(\theta+3\right)^4};\qquad\nonumber
\end{align}	
The coefficient of variation (CV), coefficient of skewness $\sqrt{\beta_1}$, coefficient of skewness $\beta_2$ and index of dispersion $\gamma$ of proposed distribution is obtained as
\begin{align}
	C.V&=\frac{\sigma}{\mu}=\frac{\sqrt{\theta^2+8\theta+14}}{\theta+4};\qquad
	\sqrt{\beta_1}=\frac{\mu_3}{\mu_2^{\frac{3}{2}}}=\frac{2\left(\theta^3+12\theta^2+42\theta+46\right)}{\left(\theta^2+8\theta+14\right)^{\frac{3}{2}}}\nonumber\\
	\beta_2&=\frac{\mu_4}{\mu_2^2}=\frac{3\left(3\theta^4+48\theta^3+260\theta^2+592\theta+488\right)}{\left(\theta^2+8\theta+14\right)^2};\qquad
	\gamma=\frac{\mu_2}{\mu_1}=\frac{\left(\theta^2+8\theta+14\right)}{\theta(\theta+3)(\theta+4)}\nonumber
\end{align}
\section{Generating functions}
The moment generating function $M_x(t)$, characteristic function $\Phi_x(t)$ and cumulant generating function $\kappa_x(t)$ of proposed distribution are given by
\begin{align}
	M_x(t)&=\left[1-\frac{(2+\theta)t}{(3+\theta)\theta}\right]\left(1-\frac{t}{\theta}\right)^{-2};{\frac{t}{\theta}}\\
	\Phi_x(t)&=\left[1-\frac{(2+\theta)it}{(3+\theta)\theta}\right]\left(1-\frac{it}{\theta}\right)^{-2}; \qquad i=\sqrt{-1}\\
	\kappa_x(t)&=\log\left(1-\frac{(2+\theta)it}{(3+\theta)\theta}\right)-2\log\left(1-\frac{it}{\theta}\right)
\end{align}
By series expansion of $\log(1-x)=-\sum\limits_{r=0}^{\infty}\frac{x^r}{r}$,we get
\begin{align}
	\kappa_x(t)&=-\sum\limits_{r=0}^{\infty}\left(\frac{(2+\theta)}{(3+\theta)\theta}\right)^r\frac{(it)^{r}}{r}+2\sum\limits_{r=0}^{\infty}\frac{\left(\frac{it}{\theta}\right)^r}{r}\nonumber\\
	&=2\sum\limits_{r=0}^{\infty}\frac{(r-1)!}{\theta^r}\frac{(it)^r}{r!}-\sum\limits_{r=0}^{\infty}(r-1)!\left[\frac{\theta+2}{\theta(\theta+3)}\right]^{r}\frac{(it)^r}{r!}\nonumber
\end{align}
Hence $r$-th cumulant of $i$-Garima distribution is given by\\
$\kappa_r$=coefficient of $\frac{(it)^r}{r!}$ in $\kappa_x(t)$

\begin{align}
=2\frac{(r-1)!}{\theta^r}-\frac{(r-1)!(\theta+2)^r}{\left[\theta(\theta+3)\right]^r};\qquad r=1,2,3,...\nonumber
\end{align}
From this we get the four moments which are same as obtained earlier by equation (1)
\begin{align}
	\mu_1&=\kappa_1=\frac{\theta+4}{\theta(\theta+3)};\qquad 
	\mu_2=\kappa_2=\frac{\theta^2+8\theta+14}{\theta^2\left(\theta+3\right)^2};\qquad\nonumber\\
	\mu_3&=\kappa_3=\frac{2\left(\theta^3+12\theta^2+42\theta+46\right)}{\theta^3\left(\theta+3\right)^3};\qquad
	\mu_4=\kappa_4+3\kappa_2^2=\frac{3\left(3\theta^4+48\theta^3+260\theta^2+592\theta+488\right)}{\theta^4\left(\theta+3\right)^4};\qquad\nonumber
\end{align}
\section{Hazard rate function and mean residual life function}
Let $X$ be a random variable with pdf $g(x)$ and cdf $G(x)$. The hazard function and the mean residual life function (MRLF) is given as respectively
\begin{align}
	h(x)&=\lim\limits_{\Delta x\to\infty}\frac{P(X<x+\Delta x|X>x)}{\Delta x}=\frac{g(x;\theta)}{1-G(x; \theta)}\\
	m(x)&=E\left[X-x|X>x\right]=\frac{1}{1-G(x;\theta)}\int\limits_x^{\infty}\left[1-G(t;\theta)\right]dt
\end{align}
Now we put the value of pdf and cdf of $i$-Garima distribution in above expression we get the hazard rate function $h(x)$ and MRLF $m(x)$ of $i$-Garima distribution as
\begin{align}
	h(x)&=\frac{\theta(2+\theta+\theta x)}{\left(3+\theta+\theta x\right)}\\
	m(x)&=\frac{\left(4+\theta+\theta x\right)}{\theta\left(3+\theta+\theta x\right)}\\
	\lim\limits_{x\to 0}h(x)&=\lim\limits_{x\to 0}\theta\left[1-\frac{1}{(3+\theta+\theta x)}\right]=\theta\left[1-\frac{1}{(3+\theta)}\right]>0; \theta\in\mathbb{R^{+}}\nonumber
\end{align}
As $x\to\infty$ we get
\begin{align}
	\lim\limits_{x\to\infty}h(x)=\lim\limits_{x\to\infty}\theta\left[1-\frac{1}{(3+\theta+\theta x)}\right]=\theta>0;\theta\in\mathbb{R^{+}}\nonumber
\end{align}
Hence, $h(x)>0,  x>0 ,\theta>0$	

\begin{figure}[H]
	\centering
	\includegraphics[width=0.75\linewidth]{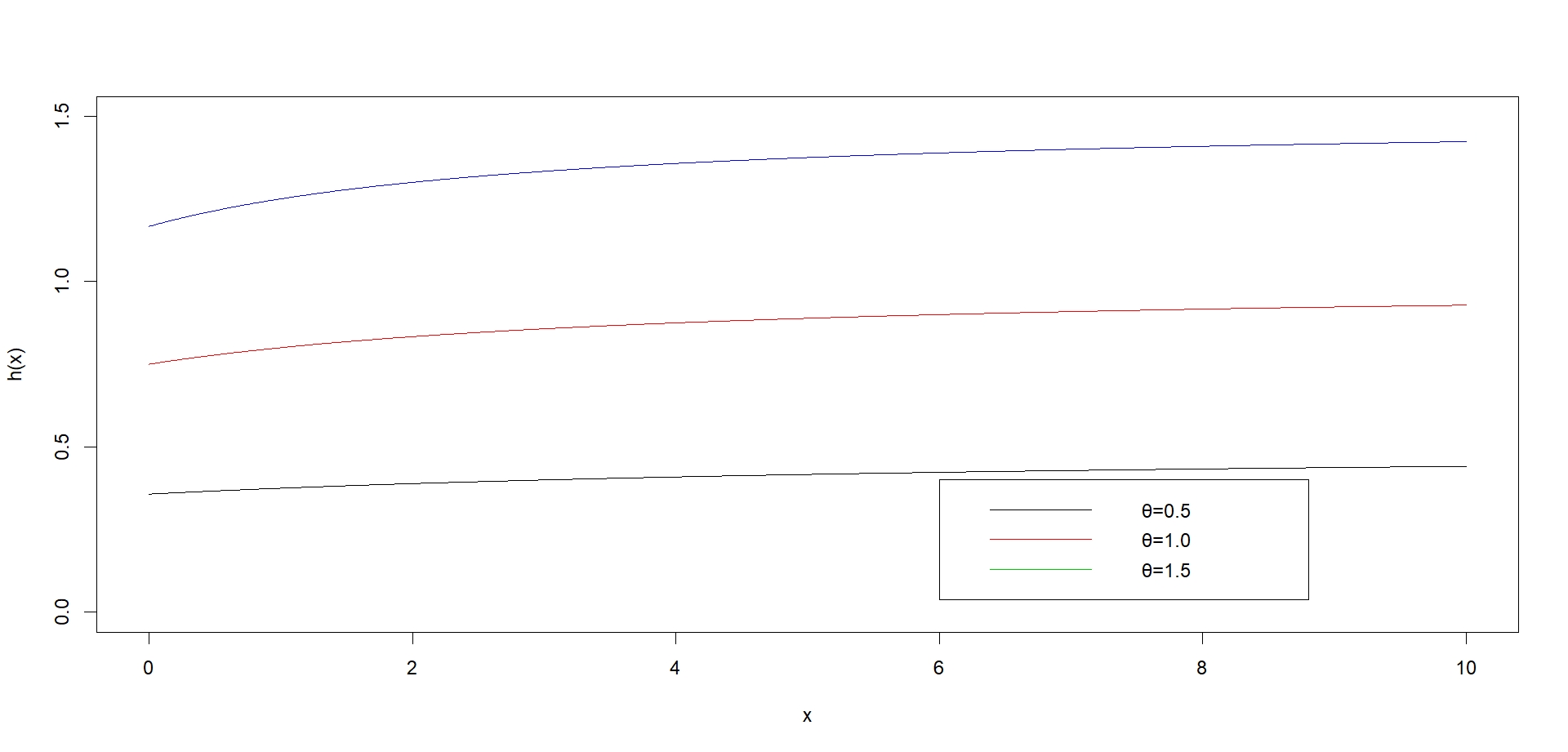}
	\caption{ Hazard function of $i$-Garima distribution}
	\label{fig:hazard}	
\end{figure}

We can see that $h(x)$ is an increasing function for the value $x>0$ and $\theta>0$ and $m(0)=\frac{\theta+4}{\theta(\theta+3)}$ which is mean of the distribution and also $m(x)$ is decreasing function for $x>0$ and $\theta>0$ .
\begin{figure}[H]
	\centering
	\includegraphics[width=0.75\linewidth]{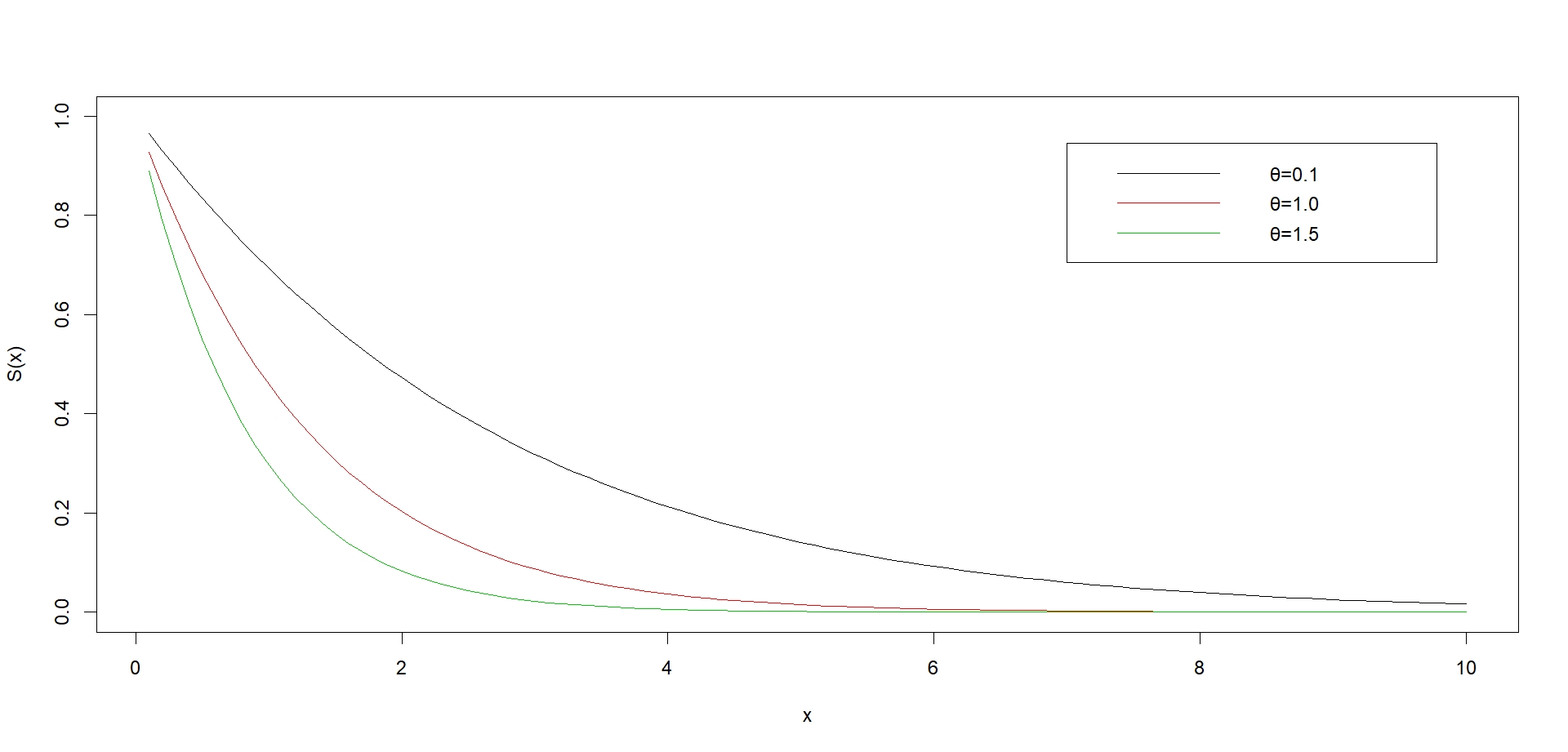}
	\caption{ Survival function of $i$-Garima distribution}
	\label{fig:survival}	
\end{figure}
\section{Quantile Function}
\begin{theorem}
	If $X\sim i-Garima(\theta)$, then Quantile function of $X$ is defined as
	\begin{align}
		Q(p)=-1-\frac{3}{\theta}-\frac{1}{\theta}W_{-1}\left( -(1-p)(\theta+3)e^{-(-\theta+3)}\right)\nonumber
	\end{align}
	where $p\in (0,1)$ and $W_{-1}$ is the negetive branch of the Lambert W function.
\end{theorem}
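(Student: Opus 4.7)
The plan is to invert the cdf directly. Starting from the equation $G(x;\theta)=p$, I would use expression (5) to write
\begin{align*}
1-p = \left[1+\frac{\theta x}{\theta+3}\right]e^{-\theta x},
\end{align*}
and clear the fraction by multiplying both sides by $\theta+3$, obtaining $(1-p)(\theta+3)=(\theta+3+\theta x)e^{-\theta x}$. The right-hand side has the shape $u\cdot e^{-u}$ up to a multiplicative constant, so the natural next move is the substitution $u=\theta+3+\theta x$, which gives $\theta x = u-(\theta+3)$ and hence $e^{-\theta x}=e^{-u}e^{\theta+3}$.

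Substituting and moving the exponential factor to the left turns the equation into
\begin{align*}
-(1-p)(\theta+3)e^{-(\theta+3)} = (-u)\,e^{-u}.
\end{align*}
This is exactly the defining equation of the Lambert $W$ function applied to the quantity on the left, so $-u = W\!\bigl(-(1-p)(\theta+3)e^{-(\theta+3)}\bigr)$. Reverting the substitution, $\theta x = -(\theta+3) - W(\cdot)$, and dividing by $\theta$ yields the stated formula (up to the evident typographical sign in the displayed exponent). All the algebra is routine; the interesting work lies in choosing the correct branch.

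The main obstacle is justifying that the branch must be $W_{-1}$. I would handle this by analysing the admissible range of the substituted variable. For $x>0$ and $\theta>0$ we have $u=\theta+3+\theta x > \theta+3 > 1$, so $-u<-1$. Since $W_0$ maps $(-1/e,0)$ into $(-1,0)$ while $W_{-1}$ maps $(-1/e,0)$ into $(-\infty,-1)$, only the branch $W_{-1}$ can produce values below $-1$. I would also verify that the argument $-(1-p)(\theta+3)e^{-(\theta+3)}$ lies in $(-1/e,0)$ for all $p\in(0,1)$ and $\theta>0$: this follows because $v\mapsto v e^{-v}$ attains its maximum $1/e$ at $v=1$ and is strictly decreasing for $v>1$, so $(\theta+3)e^{-(\theta+3)}<1/e$, and multiplying by $1-p\in(0,1)$ preserves the bound. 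With the branch fixed and the argument confirmed to lie in the domain of $W_{-1}$, solving for $x$ gives $Q(p)$ as stated.
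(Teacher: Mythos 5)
Your proposal follows essentially the same route as the paper: rewrite $G(x)=p$ as $(\theta+3+\theta x)e^{-\theta x}=(1-p)(\theta+3)$, multiply by $-e^{-(\theta+3)}$ to reach the form $(-u)e^{-u}$, and invert with the Lambert $W$ function. Your justification of the branch choice (showing $-u<-1$ forces $W_{-1}$, and checking the argument lies in $(-1/e,0)$) is actually more complete than the paper's, which only remarks that $3+\theta+\theta Q(p)>1$; you also correctly flag that the exponent $e^{-(-\theta+3)}$ in the stated formula is a typographical error for $e^{-(\theta+3)}$.
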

\begin{proof}
	Let,
	\begin{align}
		Q(p)=F^{-1}(p), p\in (0,1)\nonumber
	\end{align}
	The quantile function, say $q(p)$, defined by $G(Q(p))=p$ is the root of the equation
	\begin{align}
		1-\left( 1+\frac{\theta Q(p)}{\theta+3}\right) e^{-\theta Q(p)}=p\nonumber\\
		\left[ 3+\theta+\theta Q(p)\right] e^{-\theta Q(p)}=(1-p)(\theta+3)\nonumber
	\end{align}
	Multiplying both side by $-e^{-(\theta+3)}$ we get
	\begin{align}
		-\left[ 3+\theta+\theta Q(p)\right] e^{-\left(3+\theta+\theta Q(p)\right)}=-(1-p)(\theta+3)e^{-(3+\theta)}\nonumber
	\end{align}
	Now $\left(3+\theta+\theta Q(p)\right)>1 \forall{\theta}>0, Q(p)>0$
	By applying W-function defined as the solution of the equation $w(z)e^{W(z)}=z$, the above equation can be written as 
	\begin{align}
		W_{-1}\left( -(1-p)(\theta+3)e^{-(-\theta+3)}\right)=-\left(3+\theta+\theta Q(p)\right)\nonumber	
	\end{align}
	where and $W_{-1}(.)$ is the negetive branch of the Lambert W function and we get the required result
	\begin{align}
		Q(p)=-1-\frac{3}{\theta}-\frac{1}{\theta}W_{-1}\left( -(1-p)(\theta+3)e^{-(-\theta+3)}\right)
	\end{align}
\end{proof}
\section{Stochastic Orderings}
Stochastic ordering of a continuous random variable is an important tool to judging their comparative behaviour. A random variable $X$ is said to be smaller than a random variable $Y$.\\
(i) Stochastic order $X\leq_{st}Y$ if $F_{X}{(x)}\geq F_{Y}{(x)}$ for all x.\\
(ii) Hazard rate order $X\leq_{hr}Y$ if $h_{X}{(x)}\geq h_{Y}{(x)}$ for all x.\\
(iii) Mean residual life order $X\leq_{mrl}Y$ if $m_{X}{(x)}\geq m_{Y}{(x)}$ for all x.\\
(iv) Likelihood ratio order $X\leq_{lr}Y$ if $\frac{f_{X}{(x)}}{f_{Y}{(x)}}$ decreases in x.\\
The following results by \cite{shaked2007stochastic} are well known for introducing stochastic ordering of distributions
\begin{align}
	X\leq_{lr}Y\implies X&\leq_{hr}Y\implies X\leq_{mrl}Y\nonumber\\
	&\Downarrow\nonumber\\ X&\leq_{st}Y\nonumber
\end{align}
With the help of following theorem we claim that $i$-Garima distribution is ordered with respect to strongest likelihood ratio ordering

\begin{theorem}
	Let $X\sim i-Garima(\theta_1)$ distribution and $Y\sim i-Garima(\theta_2)$ distribution. If $\theta_1>\theta_2 $ then $X\leq_{lr}Y$ and therefore $X\leq_{hr}Y$, $X\leq_{mrl}Y$ and $X\leq_{st}Y$.
\end{theorem}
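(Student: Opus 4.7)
The plan is to establish the strongest ordering $X \leq_{lr} Y$ directly from the definition, and then invoke the chain of implications stated right before the theorem to obtain the three weaker orderings for free. Concretely, I would form the likelihood ratio
\[
L(x) \;=\; \frac{g(x;\theta_1)}{g(x;\theta_2)} \;=\; \frac{\theta_1(\theta_2+3)}{\theta_2(\theta_1+3)}\cdot\frac{2+\theta_1+\theta_1 x}{2+\theta_2+\theta_2 x}\cdot e^{-(\theta_1-\theta_2)x},
\]
and show that $L(x)$ is decreasing in $x>0$. Taking logarithms turns this into a one-variable calculus problem: differentiating $\log L(x)$ gives
\[
\frac{d}{dx}\log L(x) \;=\; \frac{\theta_1}{2+\theta_1+\theta_1 x} \;-\; \frac{\theta_2}{2+\theta_2+\theta_2 x} \;-\; (\theta_1-\theta_2),
\]
so the whole claim reduces to verifying that this expression is non-positive for every $x>0$ whenever $\theta_1>\theta_2$.

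To handle the key inequality, I would set $A=2+\theta_1+\theta_1 x$ and $B=2+\theta_2+\theta_2 x$ and clear denominators, reducing the claim $\theta_1/A - \theta_2/B \le \theta_1-\theta_2$ to $\theta_1 B - \theta_2 A \le (\theta_1-\theta_2)AB$. Expanding the left-hand side, the cross terms $\theta_1\theta_2$ and $\theta_1\theta_2 x$ cancel, leaving $2(\theta_1-\theta_2)$; dividing through by the positive quantity $\theta_1-\theta_2$ the claim collapses to the trivial bound $2 \le AB$, which holds since $A,B>2$ for $x>0$ and $\theta_i>0$. Hence $\frac{d}{dx}\log L(x) < 0$ and $L$ is strictly decreasing, giving $X \leq_{lr} Y$.

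Finally I would close by citing the implication diagram $X\leq_{lr}Y \Rightarrow X\leq_{hr}Y \Rightarrow X\leq_{mrl}Y$ and $X\leq_{hr}Y \Rightarrow X\leq_{st}Y$ reproduced earlier from \cite{shaked2007stochastic} to conclude the remaining three orderings without further computation. I do not anticipate a genuine obstacle here: the argument is routine once the cancellation of the $\theta_1\theta_2$ cross terms is spotted, and the only mildly delicate point is making sure the sign of $\theta_1-\theta_2$ is tracked correctly when dividing through at the end.
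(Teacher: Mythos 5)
Your proposal is correct and follows essentially the same route as the paper: form the likelihood ratio, differentiate its logarithm, and show the derivative is negative for $\theta_1>\theta_2$, then invoke the implication chain from \cite{shaked2007stochastic}. If anything, your algebra is cleaner than the paper's displayed computation, which accidentally drops the term $\log\bigl(\tfrac{2+\theta_1+\theta_1 x}{2+\theta_2+\theta_2 x}\bigr)$ in the intermediate step and omits a factor of $2$ in the numerator of the combined fraction (it should read $\tfrac{2(\theta_1-\theta_2)}{(2+\theta_1+\theta_1 x)(2+\theta_2+\theta_2 x)}$), though the final sign conclusion is unaffected.
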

\begin{proof}
	We have 
	\begin{align}
		\frac{f_{X}(x)}{f_{Y}(x)}=\frac{\theta_1(\theta_2+3)}{\theta_2(\theta_1+3)}\left(\frac{2+\theta_1+\theta_1 x}{2+\theta_2+\theta_2 x}\right)e^{-\left(\theta_1-\theta_2\right)x};\qquad x>0\nonumber
	\end{align}
	Now taking log both side we get
	\begin{align}
		\log\frac{f_{X}(x)}{f_{Y}(x)}=\log\left[\frac{\theta_1(\theta_2+3)}{\theta_2(\theta_1+3)}\right]{-\left(\theta_1-\theta_2\right)x}\nonumber
	\end{align}
	By differentiating both side we get
	\begin{align}
		\frac{d}{dx}\log\frac{f_{X}(x)}{f_{Y}(x)}=\frac{\theta_1-\theta_2}{(2+\theta_1+\theta_1 x)(2+\theta_2+\theta_2 x)}{-\left(\theta_1-\theta_2\right)}\nonumber
	\end{align}
	Thus for $\theta_1>\theta_2 ,\frac{d}{dx}\log\frac{f_{X}(x)}{f_{Y}(x)}<0. $This means that $X\leq_{lr}Y$ and hence $X\leq_{hr}Y$, $X\leq_{mrl}Y$ and $X\leq_{st}Y$.
\end{proof}
\section{Order statistics}
Let $X_{1},X_{2},...,X_{m}$ be a random sample of size $m$ from $i$-Garima distribution and also let $X_{(1)},X_{(2)},...,X_{(m)}$ be the corresponding order statistics.The pdf and cdf of $r^{th}$ order statistics say $Y=X_{(r)}$ are given by
\begin{align}
	f_{(r:m)}(y)=\frac{m!}{(r-1)!(m-r)!}F^{r-1}(y)\left[1-F(y)\right]^{m-r}f(y)\nonumber\\
	=\frac{m!}{(r-1)!(m-r)!}\sum_{l=0}^{m-r}{{m-r}\choose_{l}}(-1)^lF^{r+l-1}(y)f(y)
\end{align}
and
\begin{align}
	F_{(r:m)}(y)=\sum_{j=r}^{m}{{m}\choose_{j}}F^{j}(y)\left[1-F(y)\right]^{m-j}\nonumber\\
	=\sum_{j=r}^{m}\sum_{l=0}^{m-j}{{m}\choose_{j}}{{m-j}\choose_{l}}(-1)^lF^{j+l}(y)
\end{align}
respectively, for $r=1(1)m$\\
Based on equation (16) and (17) the pdf and cdf of $r$-th order statistics of $i$-Garima distribution is given by
\begin{align}
	f_{(r:m)}(y)=\frac{m!\theta(3+\theta+\theta x)e^{-\theta x}}{(\theta+3)(r-1)!(m-r)!}\sum_{l=0}^{m-r}{{m-r}\choose{l}}\left[1-\frac{\theta x +(\theta+3)}{(\theta+3)}e^{-\theta x}\right]^{r+l-1}
\end{align}
and
\begin{align}
	F_{(r:m)}(y)=\sum_{j=r}^{m}\sum_{l=0}^{m-j}{{m}\choose_{j}}{{m-j}\choose_{l}}\left[1-\frac{\theta x +(\theta+3)}{(\theta+3)}e^{-\theta x}\right]^{j+l}
\end{align}
\section{Bonferroni and Lorenz curves}
Let the random variable $X$ is non-negative with a continuous and twice differentiable cumulative function.The \cite{bonferroni1936statistical} curve of the random variable $X$ is defined as
\begin{align}
	B(p)=\frac{1}{p\mu}\int\limits_{0}^{q}xg(x)dx=\frac{1}{p\mu}\left[\int\limits_{0}^{\infty}xg(x)dx-\int\limits_{q}^{\infty}xg(x)dx\right]=\frac{1}{p\mu}\left[\mu-\int\limits_{q}^{\infty}xg(x)dx\right]
\end{align}
and the \cite{lorenz1905methods} curve is defined by
\begin{align}
	L(p)=\frac{1}{\mu}\int\limits_{0}^{q}xg(x)dx=\frac{1}{\mu}\left[\int\limits_{0}^{\infty}xg(x)dx-\int\limits_{q}^{\infty}xg(x)dx\right]=\frac{1}{\mu}\left[\mu-\int\limits_{q}^{\infty}xg(x)dx\right]
\end{align}
where $q=G^{-1}(p)$ and $\mu=E(X)$, $p\in (0,1] $\\
The Gini index is given by
\begin{align}
	G=1-\frac{1}{\mu}\int\limits_{0}^{\infty}\left(1-G(x)\right)^2dx=\frac{1}{\mu}\int\limits_{0}^{\infty}G(x)\left(1-G(x)\right)dx
\end{align}
The Bonferroni, Lorenz curve and Gini index have application not only in economics to study income and poverty, but also in other fields like reliability,population studies,insurence, medicine. Using the equation (20) and (21) in (22) we get the Bonferroni curve, Lorenz curve and the Gini index as
\begin{align}
	B(p)=\frac{1}{p}\left[1-\frac{\{\theta^2 q^2+(\theta^2+4\theta)q+(\theta+4)\}e^{-\theta q}}{\theta+4}\right]\\
		L(p)=1-\frac{\{\theta^2 q^2+(\theta^2+4\theta)q+(\theta+4)\}e^{-\theta q}}{\theta+4}
\end{align}
and
\begin{align}
	G=\frac{2\theta^2+16\theta+29}{4(\theta+3)(\theta+4)}
\end{align}
\section{Entropies}
\subsection{Renyi Entropy}
An entropy is a measure of variation of the uncertainty,  \cite{renyi1961measures} gave an expression of the Entropy function defined by. If $X$ is a continuous random variable having probability density function $g(.)$, then the Renyi Entropy is defined as
\begin{align}
	e(\eta)=\frac{1}{1-\eta}\log\left[\int\limits_0^{\infty}g^{\eta}(x)dx\right]	
\end{align}
where $\eta>0$ and $\eta=0$.\\
The Renyi Entropy for the $i$-Garima distribution is defined as
\begin{align}
	e(\eta)&=\frac{1}{1-\eta}\log\left[\int\limits_0^{\infty}\left(\frac{\theta}{\theta+3}\right)^{\eta}\left(2+\theta+\theta x\right)^{\eta}e^{-\eta\theta x} dx\right]\nonumber\\
	&=\frac{1}{1-\eta}\log\left[\int\limits_0^{\infty}\frac{\theta^{\eta}(\theta+2)^{\eta}}{(\theta+3)^{\eta}}\left(1+\frac{\theta x}{\theta+2}\right)^{\eta}e^{-\eta\theta x}dx\right]\nonumber\\
	&=\frac{1}{1-\eta}\log\left[\int\limits_0^{\infty}\frac{\theta^{\eta}(\theta+2)^{\eta}}{(\theta+3)^{\eta}} \sum_{j=0}^{\infty}{{\eta}\choose_{j}}\left(\frac{\theta x}{\theta+2}\right)^{j} e^{-\eta\theta x}dx\right]\nonumber\\
	&=\frac{1}{1-\eta}\log\left[\sum_{j=0}^{\infty}{{\eta}\choose_{j}}\frac{\theta^{\eta+j}(\theta+2)^{\eta-j}}{(\theta+3)^{\eta}}\int\limits_0^{\infty}x^{j}e^{-\eta\theta x}dx\right]\nonumber\\
	&=\frac{1}{1-\eta}\log\left[\sum_{j=0}^{\infty}{{\eta}\choose_{j}}\frac{\theta^{\eta+j}(\theta+2)^{\eta-j}}{(\theta+3)^{\eta}}\int\limits_0^{\infty}x^{j}e^{-\eta\theta x}dx\right]\nonumber\\
	&=\frac{1}{1-\eta}\log\left[\sum_{j=0}^{\infty}{{\eta}\choose_{j}}\frac{\theta^{\eta+j}(\theta+2)^{\eta-j}}{(\theta+3)^{\eta}}\frac{\Gamma{(j+1)}}{(\eta\theta)^{j+1}}\right]\nonumber\\
	&=\frac{1}{1-\eta}\log\left[\sum_{j=0}^{\infty}{{\eta}\choose_{j}}\frac{\theta^{\eta-1}(\theta+2)^{\eta-j}}{(\theta+3)^{\eta}}\frac{\Gamma{(j+1)}}{(\eta)^{j+1}}\right]
\end{align}
\subsection{Shannon Entropy}
The Shannon Entropy by \cite{shannon1951prediction} of $i$-Garima distribution is given as
\begin{align}
	E(-\log x)&=-\int\limits_0^{\infty}\log(f(x))f(x)dx\nonumber\\
	&=-\log\left(\frac{\theta}{\theta+3}\right)\int\limits_0^{\infty}f(x)dx-\int\limits_0^{\infty}\log\left(2+\theta+\theta x\right)f(x)dx+\int\limits_0^{\infty}\theta xf(x)dx\nonumber\\
	&=-\log\left(\frac{\theta}{\theta+3}\right)-\log(\theta+2)-\int\limits_0^{\infty}\left(1+\frac{\theta x}{\theta+2}\right)f(x)dx+\theta E(x)\nonumber\\
	&=-\log\left(\frac{\theta(\theta+2)}{\theta+3}\right)+\left(\frac{\theta+4}{\theta+3}\right)-\frac{\theta}{\theta+3}\int\limits_0^{\infty}\sum_{k=1}^{\infty}\frac{(-1)^{k+1}}{k}\left(\frac{\theta x}{\theta+2}\right)^{k}(2+\theta+
	\theta x) e^{-\theta x}dx\nonumber\\
	&=-\log\left(\frac{\theta(\theta+2)}{\theta+3}\right)+\left(\frac{\theta+4}{\theta+3}\right)-\frac{\theta}{\theta+3}\int\limits_0^{\infty}\sum_{k=1}^{\infty}\frac{(-1)^{k+1}}{k}\left(\frac{\theta x}{\theta+2}\right)^{k}(2+\theta+
	\theta x) e^{-\theta x}dx\nonumber\\
	&=-\log\left(\frac{\theta(\theta+2)}{\theta+3}\right)+\left(\frac{\theta+4}{\theta+3}\right)-\frac{\theta}{\theta+3}\sum_{k=1}^{\infty}\frac{(-1)^{k+1}}{k}\left(\frac{\theta}{\theta+2}\right)^{k}\int\limits_0^{\infty}x^{k}(2+\theta+\theta x) e^{-\theta x}dx\nonumber\\
	&=\left(\frac{\theta+4}{\theta+3}\right)-\log\left(\frac{\theta(\theta+2)}{\theta+3}\right)-\frac{1}{\theta+3}\sum_{k=1}^{\infty}\frac{(-1)^{k+1}}{k}\frac{k!(\theta+k+3)}{(\theta+2)^k}
\end{align}
\section{Stress-strength reliability}
Stress strength-model describes the life of a component having a random strength $X$ and subjected to random stress $Y$. The component function satisfactorily for $X>Y$ and fails when $Y>X$.Therefore $R=P(Y<X)$ is a measure of components reliability and in statistical literature it is known as stress-strength parameter.It has many applications like in engineering concepts such as structures, deterioration of rocket motors, static fatigue of ceramic components, fatigue failure of aircraft structures and aging of cocrete pressure vessels.\\
Let $X$ and $Y$ be independently distributed, with $X\sim i-Garima(\theta_1)$ and $Y\sim i-Garima(\theta_2)$. The CDF $F_1$ of $X$ and pdf $f_2$ of $Y$ are obtained from equation (3) and (2),respectively. Then stress-strength reliability $R$ is obtained as
\begin{align}
	R&=P(Y<X)=\int\limits_0^{\infty}P(Y<X|X=x)f_{x}(X)dx=\int\limits_0^{\infty}f(x;\theta_1)F(x;\theta_2)dx\nonumber\\
	&=1-\frac{\theta_1\left[(\theta_1\theta_2+3\theta_1+2\theta_2+6)(\theta_1+\theta_2)^2+(2\theta_1\theta_2+3\theta_1+2\theta_2)(\theta_1+\theta_2)+2\theta_1\theta_2\right]}{(\theta_1+3)(\theta_2+3)(\theta_1+\theta_2)^3}	
\end{align}
\section{Maximum likelihood estimates(MLE)}
Let $(x_1,x_2,...,x_n)$ be a random sample from $X\sim i-Garima(\theta)$. The likelihood function, $L$ is obtained as
\begin{align}
	L=\left(\frac{\theta}{\theta+3}\right)^n\prod_{i=1}^{n}(2+\theta+\theta x_i)e^{-\theta\sum\limits_{i=0}^{n}x_i}
\end{align}
Taking log both side we get
\begin{align}
	\log L=n\log\left(\frac{\theta}{\theta+3}\right)+\sum_{i=1}^{n}\log(2+\theta+\theta x_i)-\theta\sum\limits_{i=1}^{n}x_i
\end{align}
Now differentiate both side by $\theta$ we get 
\begin{align}
	\frac{d(\log L)}{d\theta}=\frac{3n}{\theta^2+3\theta}+\sum_{i=1}^{n}\frac{1+x_i}{2+\theta+\theta x_i}-n\bar{x}=0
\end{align}
Where $\bar{x}$ is the sample mean.\\
The maximum likelihood estimate, $\hat{\theta}$ of $\theta$ is the solution of the equation (32), since this is non-linear equation we solve this by numerical method.\\
\section{Goodness of Fit}
The goodness of fit of the $i$-Garima distribution has been explained with four real data sets, first data is vinyl chloride data obtained from clean upgradient monitoring wells in mg/l, provided by \cite{bhaumik2009testing}, second data set represents remission times (in months) of a random sample of 128 bladder cancer patients reported in \cite{lee2003statistical}, third data set is given by \cite{linhart1986model}, which represents the failure times of the air conditioning system of an airplane and last data set is given by  \cite{lawless2011statistical} concerning the data on time to breakdown of an insulating fluid between electrodes at a voltage of 34kV (minutes). We fitted the $i$-Garima distributions for the above data sets and compared
the results with the \cite{Garima2018Shanker}, \cite{shanker2015aradhana},  \cite{shanker2016sujatha}, \cite{shanker20155akash},  \cite{shanker2015shanker} and \cite{lindley1958fiducial} Distributions with there densities. In order to compare lifetime distributions, values of $-2lnL$, AIC (Akaike Information Criterion), BIC (Bayesian Information Criterion) and K-S Statistic (Kolmogorov-Smirnov Statistic) for the above data sets has been computed. The formulae for computing AIC, BIC, and K-S Statistics are as follows:
\begin{align}
	AIC=-2loglik+2k,&\qquad BIC=-2loglik+k\log n\nonumber\\
	D&=\sup\limits_{x}|F_{n}{(x)}-F_{0}{(x)}|\nonumber
\end{align}
where $k$= the number of parameters, $n$= the sample size, and the $F_{n}{(x)}$=empirical distribution function
From the table it is shows that the $i$-Garima distribution provides the best fit for the above data sets as it has lower -2LL,AIC, BIC, K-S values and higher p-values corresponding to K-S statistics than the other competitor models. It is obvious that $i$-Garima gives much closer fit than other compared distributions. Therefore, $i$-Garima can be considered as an important one-parameter lifetime distribution.
\begin{table}[H]
\centering
\caption{-2LL,AIC,BIC,K-S statistics for the data sets.}
\begin{tabular}{|c|c|c|c|c|c|c|c|}
	\hline
	\multirow{1}{*}{Data Sets} & \multicolumn{1}{c|}{Distrbutions} & %
	\multicolumn{1}{c|}{Estimate} & \multicolumn{1}{c|}{-2LL} & \multicolumn{1}{c|}{AIC} & \multicolumn{1}{c|}{BIC}  & \multicolumn{1}{c|}{K-S} & \multicolumn{1}{c|}{p-value} \\
	\cline{1-8}

    \multirow{7}{*}{1st} & \multicolumn{1}{c|}{$i$-Garima} & \multicolumn{1}{c|}{0.674} & \multicolumn{1}{c|}{111.18} & \multicolumn{1}{c|}{113.18} & \multicolumn{1}{c|}{114.71} & \multicolumn{1}{c|}{0.1039} & \multicolumn{1}{c|}{0.8567} \\
    \cline{2-8}
    & Garima & 0.723 & 111.50 & 113.50 & 115.03 &  0.1135 & 0.7731 \\
    \cline{2-8}
    & Aradhana & 1.133 & 116.06 & 118.06 & 119.59 & 0.1695 & 0.2826 \\
    \cline{2-8}
    & Sujatha & 1.146 & 115.54 & 117.54 & 119.07 & 0.1640 & 0.3196 \\
    \cline{2-8}
    & Akash & 1.166 & 115.15 & 117.15 & 118.68 & 0.1564 & 0.3762 \\
    \cline{2-8}
    & Shanker & 0.853 & 112.91 & 114.91 & 116.44 & 0.1308 & 0.6062 \\
    \cline{2-8}
    & Lindley & 0.199 & 112.61 & 114.61 & 116.13 &  0.1326 & 0.5881 \\
    \hline
	
	\multirow{7}{*}{2nd} & \multicolumn{1}{c|}{$i$-Garima} & \multicolumn{1}{c|}{0.143} & \multicolumn{1}{c|}{825.57} & \multicolumn{1}{c|}{827.57} & \multicolumn{1}{c|}{830.42} & \multicolumn{1}{c|}{0.0768} & \multicolumn{1}{c|}{0.4374} \\
	\cline{2-8}
	& Garima & 0.158 & 826.49 & 828.49 & 831.34 &  0.0873 & 0.2835 \\
	\cline{2-8}
	& Aradhana & 0.295 & 868.28 & 870.28 & 873.13 & 0.1713 & 0.0011 \\
	\cline{2-8}
	& Sujatha & 0.303 & 873.22 & 875.22 & 878.08 & 0.1792 & 0.0005 \\
	\cline{2-8}
	& Akash & 0.315 & 881.04 & 883.04 & 885.89 & 0.1904 & 0.0002 \\
	\cline{2-8}
	& Shanker & 0.214 & 841.68 & 843.68 & 846.53 & 0.1243 & 0.0382 \\
	\cline{2-8}
	& Lindley & 0.199 & 833.79 & 835.79 & 838.64 & 0.1114 & 0.0832 \\
	\hline
		
	\multirow{7}{*}{3rd} & \multicolumn{1}{c|}{$i$-Garima} & \multicolumn{1}{c|}{0.089} & \multicolumn{1}{c|}{140.54} & \multicolumn{1}{c|}{142.54} & \multicolumn{1}{c|}{143.49} & \multicolumn{1}{c|}{0.2770} & \multicolumn{1}{c|}{0.0883} \\
	\cline{2-8}
	& Garima & 0.098 & 142.10 & 144.10 & 145.04 & 0.3015 & 0.0499 \\
	\cline{2-8}
	& Aradhana & 0.196 & 167.37 & 169.37 & 170.32 & 0.4123 & 0.0019 \\
	\cline{2-8}
	& Sujatha & 0.200 & 169.22 & 171.22 & 172.16 & 0.4193 & 0.0015 \\
	\cline{2-8}
	& Akash & 0.206 & 171.95 & 173.95 & 174.89 & 0.4285 & 0.0011 \\
	\cline{2-8}
	& Shanker & 0.141 & 156.18 & 158.18 & 159.13 & 0.3534 & 0.0125 \\
	\cline{2-8}
	& Lindley & 0.131 & 151.08 & 153.08 & 154.03 & 0.3462 & 0.0154 \\
	\hline	
	
	\multirow{7}{*}{4th} & \multicolumn{1}{c|}{$i$-Garima} & \multicolumn{1}{c|}{0.022} & \multicolumn{1}{c|}{306.75} & \multicolumn{1}{c|}{308.75} & \multicolumn{1}{c|}{310.15} & \multicolumn{1}{c|}{0.2400} & \multicolumn{1}{c|}{0.0631} \\
	\cline{2-8}
	& Garima & 0.024 & 308.73 & 310.73 & 312.13 & 0.2657 & 0.0289 \\
	\cline{2-8}
	& Aradhana & 0.049 & 350.55 & 352.55 & 353.95 & 0.4154 & 0.0000 \\
	\cline{2-8}
	& Sujatha & 0.050 & 352.47 & 354.47 & 355.87 & 0.4182 & 0.0000 \\
	\cline{2-8}
	& Akash & 0.050 & 354.88 & 356.88 & 358.28 & 0.4213 & 0.0000 \\
	\cline{2-8}
	& Shanker & 0.033 & 325.74 & 327.74 & 329.15 & 0.3517 & 0.0012 \\
	\cline{2-8}
	& Lindley & 0.033 & 323.27 & 325.27 & 326.67 & 0.3452 & 0.0016 \\
	\hline
	
\end{tabular}
\end{table}

\section{Conclusions}

This paper suggests a new induced-transformation and a new transformed continuous one-parameter lifetime distribution named as $i$-Garima distribution. Moments about origin and moments about mean have been obtained. The nature of probability density function, cumulative distribution function, hazard rate function, stress-strength reliability and mean residual life function have been discussed. Bonferroni and Lorenz curves and the Gini index of the $i$-Garima as well as the stochastic ordering are presented. The maximum likelihood estimators of the model parameters are derived as well as distributions of order statistics are provided. The Renyi entropies, Shannon entopies, stochastic ordering are derived. A numerical example of four real lifetime data sets have been presented to show the application of $i$-Garima and the goodness of fit of $i$-Garima gives much better fit over Garima, Aradhana, Sujatha, Akash, Shanker and Lindley distributions.

\bibliographystyle{chicago}
\bibliography{myref}

\end{document}